\def\stackborderwidth{0.5mm}
\def\stackouterheight{21mm} % \stackinnerheight+\stackborderwidth
\def\stackouterwidth{7mm} % \stackinnerwidth+2*\stackborderwidth
\def\stackspacing{10mm} %\stackouterwidth + \spacebetweenstacks
\def\setstackspacing{
	\edef\elementspacingx{0}
	\edef\elementspacingy{5mm} % \elementdiameter
	\edef\elementoffsetx{3.5mm} % 0.5*\stackinnerwidth+\stackborderwidth
	\edef\elementoffsety{3mm} %\stackborderwidth+\elementradius
}
\def\setinputqueuespacing{
	\edef\elementspacingx{5mm} % \elementdiameter
	\edef\elementspacingy{0} 
	\edef\elementoffsetx{3mm} %\stackborderwidth+\elementradius
	\edef\elementoffsety{3mm} % 0.5*\stackinnerwidth
}
\def\setoutputqueuespacing{
	\edef\elementspacingx{-5mm} % \elementdiameter
	\edef\elementspacingy{0} 
	\edef\elementoffsetx{18mm} %\stackborderwidth+\elementradius
	\edef\elementoffsety{3mm} % 0.5*\stackinnerwidth
}
\def\initsystem{
	\expandafter\gdef\csname stack 0\endcsname{.}
	\expandafter\gdef\csname stack 1\endcsname{.}
	\expandafter\gdef\csname stack 2\endcsname{.}
	\expandafter\gdef\csname stack 3\endcsname{.}
}
\def\head#1,#2.{#1}
\def\tail#1,#2.{#2}
\def\push#1#2{
		\edef\lista{\csname stack #1\endcsname}
		\expandafter\xdef\csname stack #1\endcsname{#2,\lista}
}
\def\pop#1{
		\edef\lista{\csname stack #1\endcsname}
		\xdef\popped{\expandafter\head\lista}
		\expandafter\xdef\csname stack #1\endcsname{\expandafter\tail\lista.}
}
\def\move#1#2{
    \pop{#1}
    \expandafter\push{#2}{\popped}
}
\def\drawarrow#1#2#3{
		\path (topStack#1) edge [->,bend left] node[auto] {#3} (topStack#2);
}
\tikzset{
    disc/.style={rectangle,
    		minimum height=.5cm,minimum width=.5cm,
    		shade, shading=radial, inner color=#1!20, outer color=#1!60!gray},
    disc 1/.style={disc=yellow},
    disc 2/.style={disc=orange},
    disc 3/.style={disc=red},
    disc 4/.style={disc=green},
    disc 5/.style={disc=blue},
    disc 6/.style={disc=purple},
}
\newcommand{\stack}{
  \fill[black] 
  	(0mm,0mm) rectangle (\stackouterwidth,\stackborderwidth)
    (0mm,\stackborderwidth) rectangle (\stackborderwidth,\stackouterheight)
    (\stackouterwidth-\stackborderwidth,\stackborderwidth) rectangle (\stackouterwidth, \stackouterheight);
}
\newcommand{\queue}{
	\fill[black] 
  	(0mm, 0mm) rectangle (\stackouterheight,\stackborderwidth)
  	(0mm, \stackborderwidth) rectangle (\stackborderwidth,0.5*\stackouterwidth)
  	(\stackouterheight-\stackborderwidth, \stackborderwidth) rectangle (\stackouterheight,0.5*\stackouterwidth);
}
\def\drawelems#1.{
    % If #1 is empty, this expands to \if.. which is true, otherwise
    % we're safe to assume there's at least one element.
    \expandafter\if#1..\else
        \rdrawelems#1.
        \advance\curlevel by 1\relax
    \fi
}
\def\rdrawelems#1,#2.{
    \drawelems#2.
    {\edef\n{\the\curlevel}
        % Draw the element
        \node[disc #1,xshift={\n*\elementspacingx+\elementoffsetx}, yshift={\n*\elementspacingy+\elementoffsety}] {#1};
    }
}
\def\elems#1{
    \curlevel=0
    \expandafter\drawelems#1
}
\def\stackscontent{
	\setoutputqueuespacing
	\begin{scope}[yshift=\stackouterheight-\stackouterwidth]
		\node at (0.5*\stackouterheight,\stackouterwidth) (topStack0) {};
		\queue
		\expandafter\elems\csname stack 0\endcsname
	\end{scope}
	
	\setstackspacing
	\begin{scope}[xshift=\stackouterheight-\stackouterwidth]
	  \foreach \n/\x in {1,2} {
	      \begin{scope}[xshift=\n*\stackspacing]
	      		\node at (0.5*\stackouterwidth,\stackouterheight) (topStack\n) {};
	          \stack
	          \expandafter\elems\csname stack \n\endcsname
	      \end{scope}
	  }
	\end{scope}

	\begin{scope}[xshift=\stackouterheight-\stackouterwidth+3*\stackspacing,yshift=\stackouterheight-\stackouterwidth]
		\setinputqueuespacing
		\node at (0.5*\stackouterheight,\stackouterwidth) (topStack3) {};
		\queue
		\expandafter\elems\csname stack 3\endcsname
	\end{scope}	
  % bounding rectangle
  %\useasboundingbox (-0.5cm,-0.5cm) rectangle (5cm,3.25cm);
}
\newcommand{\abs}[1]{\left|#1\right|}
\newcommand{\length}[1]{\left|#1\right|}
\newcommand{\typeone}[2]{\mathcal{L}_{\textrm{I}}(#1,#2)}
\newcommand{\typetwo}[2]{\mathcal{L}_{\textrm{II}}(#1,#2)}
\newcommand{\typethree}[2]{\mathcal{L}_{\textrm{III}}(#1,#2)}
\newcommand{\perms}[2]{G(#1,#2)}
\newcommand{\states}[1]{\mathfrak{S}_{#1}}
\newcommand{\moves}[1]{M_{#1}}
\begin{document}
\frontmatter

\title{An Improved Lower Bound for Stack Sorting}

\author{Luke Schaeffer}

\institute{School of Computer Science,
University of Waterloo,
Waterloo, ON  N2L 3G1 Canada \\
\email{l3schaef@cs.uwaterloo.ca}
}

\maketitle

\begin{abstract}
We consider the problem of sorting elements on a series of stacks, introduced by Tarjan and Knuth. We improve the asymptotic lower bound for the number of stacks necessary to sort $n$ elements to $0.561 \log_2 n + O(1)$. This is the first significant improvement since the previous lower bound, $\frac{1}{2} \log_2 n + O(1)$, was established by Knuth in 1972. 
\end{abstract}

\section{Introduction}

The subject of our paper is a mathematical puzzle or game for one player. The game is played with $n$ elements labelled from $1$ to $n$ on a \emph{system} of $k$ stacks (also labelled consecutively) arranged in series, with a queue at either end. The queue at one end is called the \emph{input queue} and the queue at the other end is the \emph{output queue}. The game begins with all stacks and queue empty, except for the input queue, which contains the elements (from front to back) $\sigma(1), \sigma(2), \ldots, \sigma(n)$ for some arbitrary permutation $\sigma \in S_n$. The player has the following legal \emph{moves}:
\begin{enumerate}
\setlength{\itemsep}{0pt}
\setlength{\parskip}{0pt}
\item Dequeue an element from the front of the input queue and push it onto stack 1,
\item Pop an element from stack $i$ and push it onto stack $i+1$, for $1 \leq i < k$,
\item Pop an element from stack $k$ and enqueue it on the output queue.
\end{enumerate}
That is, the player may move an element from one stack or queue to the next. Eventually all the elements will reach the output queue, so there are no more legal moves and the game ends. The player wins if, at the end of the game, the elements in the output queue are in sorted order.

The problem of sorting elements with a single stack was introduced by Knuth in \cite{TAOCPv1}, which includes several exercises about stack sorting. For instance, exercise 2.2.1.5 (p. 239) asks the reader to show that $\pi \in S_n$ \emph{cannot} be sorted with one stack if and only if there exist indices $i < j < k$ such that $\pi(j) < \pi(k) < \pi(i)$. Another exercise shows that a single stack can sort exactly $\frac{1}{n+1} \binom{2n}{n}$ permutations. Together these exercises comprise a founding result in the field of permutation patterns.

Tarjan generalized stack sorting to networks of stacks and queues in \cite{tarjan}, including systems of stacks arranged in series. Tarjan showed that a system of $k$ stacks in series can sort $3 \cdot 2^{k-1}$ elements, and cited \cite{TAOCPv3} for a result that $k$ stacks cannot sort $4^{k}$ elements. Knuth revisited stack sorting in \cite{TAOCPv3} and gave two exercises on series of stacks. The exercises concern the minimum of stacks required to sort any permutation of $n$ elements, which we denote $k_n$. Exercise 5.2.4.20 (p. 170) asks for the rate of growth of $k_n$ as a function of $n$. Knuth lists this exercise as an open problem, and it remains an open problem today. Furthermore, there has been no improvement on the elementary bounds 
\begin{align*}
\frac{1}{2} \log_2 n + O(1) &\leq k_n \leq \log_2 n + O(1)
\end{align*}
given in \cite{TAOCPv3}. See \cite{survey} for an excellent survey on stack sorting.

\begin{table}
	\begin{center}
		\begin{tabular}{|c|c|c|c|c|c|c|c|c|c|c|c|c|c|}
		\hline
		$n$   & 0 & 1 & 2 & 3 & 4 & 5 & 6 & 7 & 8 & 9 & 10 & 11 & 12 \\
		\hline
		$k_n$ & 0 & 0 & 1 & 2 & 2 & 2 & 2 & 3 & 3 & 3 & 3 & 3 & 3 \\ 
		\hline
		\end{tabular}
	\end{center}
	\caption{Values of $k_n$ for small $n$.}
	\label{tab:kntable}
\end{table}

% set up stacks and initial content
\initsystem
\push{0}{1}
\push{0}{3}
\push{0}{2}
\push{0}{4}
  
\begin{figure}[h]
	\caption{Sorting the permutation $\binom{1 \, 2 \, 3 \, 4}{4 \, 2 \, 3 \, 1}$ with 2 stacks using the sequence $m_{121121232333}$ (broken into 7 steps $m_{12}$, $m_{1}$, $m_{12}$, $m_{123}$, $m_{23}$, $m_{3}$, $m_{3}$).}
% draw initial state
\begin{center}
  \begin{tikzpicture}
  	\stackscontent
  \end{tikzpicture}
  \move{0}{2}
  \begin{tikzpicture}
  	\stackscontent
  	\drawarrow{0}{2}{$m_{12}$}
  \end{tikzpicture}

	\move{0}{1}
  \begin{tikzpicture}
  	\stackscontent
  	\drawarrow{0}{1}{$m_{1}$}
  \end{tikzpicture}
  \move{0}{2}
  \begin{tikzpicture}
  	\stackscontent
  	\drawarrow{0}{2}{$m_{12}$}
  \end{tikzpicture}
  
  \move{0}{3}
  \begin{tikzpicture}
  	\stackscontent
  	\drawarrow{0}{3}{$m_{123}$}
  \end{tikzpicture}
  \move{1}{3}
  \begin{tikzpicture}
  	\stackscontent
  	\drawarrow{1}{3}{$m_{23}$}
  \end{tikzpicture}
  
  \move{2}{3}
  \begin{tikzpicture}
  	\stackscontent
  	\drawarrow{2}{3}{$m_{3}$}
  \end{tikzpicture}
  \move{2}{3}
  \begin{tikzpicture}
  	\stackscontent
  	\drawarrow{2}{3}{$m_{3}$}
  \end{tikzpicture}
\end{center}
\end{figure}

The main result in this paper is a better lower bound for $k_n$. We will start by re-proving Knuth's lower bound, $k_n \geq \frac{1}{2} \log_2 n + O(1)$. Then we will improve the lower bound to $0.513 \log_2 n + O(1)$ by counting strings of moves, and explore several variations on this approach. Finally, we combine several techniques and obtain a bound of $k_n \geq 0.561 \log_2 n + O(1)$ with the aid of a computer.

\section{Notation and Definitions}

Let us start by introducing some formal notation to describe the game. Define the current \emph{state} of the game to be a description of the elements in each stack and queue (for instance, a list of elements from bottom to top for each stack and front to back for each queue). Let $\states{k}$ be the set of all possible states for a system of $k$ stacks containing finitely many elements, as well as an additional symbol $\varnothing$ to indicate an illegal state. We say a state is initial if all the elements are in the input queue, and a state is final if all the elements are in the output queue. Let $I(n, k)$ be the initial state such that the elements are in the input queue in sorted order. Similarly, let $F(n, k)$ be the final state with the elements in the output queue in sorted order. Note that we win the game if we finish in state $F(n, k)$. 

A \emph{move} is some action by the player that changes the state. We let $m_i$ denote the move that pushes an element onto stack $i$ for each $1 \leq i \leq k$, and let $m_{k+1}$ be the move that pushes to the output queue, so the complete set of moves is $\moves{k} = \{ m_1, \ldots, m_{k+1} \}$. Let $\moves{k}^*$ denote the free monoid on $\moves{k}$, or (abusing notation slightly) the set of finite strings over $\moves{k}$. We adopt the following common notation for string-related concepts. Given $u, v \in \moves{k}^*$, let $uv$ denote their concatenation, let $\abs{u}$ denote the length of $u$ and let $\abs{u}_{i}$ be the number of occurrences of $m_i$ in $u$. Note that we will sometimes write a long sequence of moves $m_{i_1} m_{i_2} \cdots m_{i_j}$ in the more compact notation $m_{i_1 i_2 \cdots i_j}$. 

The relationship between moves and states is described by a monoid action $\ast \colon \moves{k}^{*} \times \states{k} \rightarrow \states{k}$, where sequences of moves act on states. Suppose $w \in \moves{k}^{*}$ is a sequence of moves and $s \in \states{k}$ is a state. Define $w \ast s$ to be the state obtained by starting with state $s$ and performing moves from the sequence $w$ one at a time. Define $w \ast \varnothing := \varnothing$ for all $w$ and $w \ast s := \varnothing$ if some move in $w$ is illegal when we apply the sequence to $s$.

\subsection{Sortable Permutations}
\label{introducingkn}

Suppose we have some $\pi \in S_n$, let $s_{\pi} \in \states{k}$ be the initial state with input queue (from front to back) $\pi(1), \ldots, \pi(n)$ and let $t_{\pi} \in \states{k}$ be the final state with output queue $\pi(1), \ldots, \pi(n)$. We say $\pi$ is \emph{sortable by $k$ stacks} if there exists a sequence of moves $w \in \moves{k}^*$ such that $w \ast s_{\pi} = F(n, k)$. Similarly, we say $\pi$ is \emph{generated by $k$ stacks} if there exists a sequence of moves $w \in \moves{k}^*$ such that $w \ast I(n, k) = t_{\pi}$. 

\begin{definition}
Let $\perms{n}{k} \subseteq S_n$ be the set of permutations generated by $k$ stacks. 
\end{definition}
Recall that our goal is to study the number of stacks necessary to sort $n$ elements. Clearly we can sort all permutations if and only if we can generate all permutations (that is, if $\perms{n}{k} = S_n$).
\begin{definition}
For each $n \geq 0$, let $k_n$ denote the smallest integer such that $\perms{n}{k_n} = S_n$. 
\end{definition}
In other words, $k_n$ is the number of stacks required to sort $n$ elements.
The sequence $\{ k_i \}_{i=1}^{\infty}$ tells us precisely when $\perms{n}{k} = S_n$ according to the following proposition.
\begin{proposition}
For all $n, k \geq 0$, we can sort any permutation of $n$ elements on $k$ stacks (that is, $\perms{n}{k} = S_n$) if and only if $k \geq k_n$. 
\end{proposition}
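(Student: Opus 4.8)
The plan is to prove the proposition by establishing monotonicity of sortability in $k$: if $\perms{n}{k} = S_n$ for some $k$, then $\perms{n}{k'} = S_n$ for all $k' \geq k$, and conversely if $\perms{n}{k} \neq S_n$ then $\perms{n}{k'} \neq S_n$ for all $k' \leq k$. Once monotonicity is in hand, the proposition is immediate from the definition of $k_n$ as the \emph{smallest} integer with $\perms{n}{k_n} = S_n$: the set $\{k : \perms{n}{k} = S_n\}$ is an up-set with minimum $k_n$, hence equals $\{k : k \geq k_n\}$. I should also dispose of the edge case that this set is nonempty, i.e. that $k_n$ is well-defined; for this it suffices to exhibit \emph{some} $k$ with $\perms{n}{k} = S_n$, and the upper bound $k_n \leq \log_2 n + O(1)$ cited in the introduction (originally, $n$ stacks trivially suffice, since with $n$ stacks one can buffer each element on its own stack) does the job.

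The heart of the argument is the monotonicity step, and the natural approach is a \textbf{simulation}: a system of $k+1$ stacks can do everything a system of $k$ stacks can, by simply leaving one stack unused (acting as a ``pass-through''). Concretely, I would define an embedding of move sequences $\phi \colon \moves{k}^* \to \moves{k+1}^*$ as follows. Think of the $k+1$ stacks as the original $k$ stacks with one extra stack, say stack $k+1$, inserted at the output end; the move $m_i$ of the $k$-stack system for $1 \le i \le k$ maps to $m_i$ in the $(k+1)$-stack system, and the move $m_{k+1}$ (push to output) of the $k$-stack system maps to the two-move block $m_{k+1} m_{k+2}$ (push onto the new stack $k+1$, then immediately pop it to the output). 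One then checks that under this translation, stack $k+1$ is always empty except momentarily during such a block, so the contents of stacks $1, \ldots, k$ and of both queues in the $(k+1)$-stack simulation exactly track the corresponding contents in the $k$-stack game after each original move. In particular, if $w \ast I(n,k) = t_\pi$, then $\phi(w) \ast I(n, k+1) = t_\pi$, so $\perms{n}{k} \subseteq \perms{n}{k+1}$. Iterating gives $\perms{n}{k} \subseteq \perms{n}{k'}$ for all $k' \geq k$, which is exactly monotonicity.

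With monotonicity established, I would finish as follows. If $k \geq k_n$, then $\perms{n}{k_n} = S_n$ together with $\perms{n}{k_n} \subseteq \perms{n}{k}$ gives $\perms{n}{k} = S_n$. Conversely, if $\perms{n}{k} = S_n$, then $k$ belongs to the set $\{k' : \perms{n}{k'} = S_n\}$, whose minimum is $k_n$ by definition, so $k \geq k_n$. This completes the proof.

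The main obstacle I anticipate is purely bookkeeping: making the simulation claim precise enough to be convincing without drowning in notation. One has to set up the correspondence between states of the two systems carefully — specifying exactly how a state of the $k$-stack system is ``padded'' to a state of the $(k+1)$-stack system with an empty extra stack — and then verify by induction on the length of $w$ that $\phi$ intertwines the two monoid actions, i.e. $\phi(w) \ast \iota(s) = \iota(w \ast s)$ where $\iota$ denotes the padding map, with the understanding that both sides are $\varnothing$ simultaneously. The verification itself is a routine case analysis on the last move of $w$ (cases $m_1, \ldots, m_{k-1}$, then $m_k$, then $m_{k+1}$), but it is the one place where care is genuinely needed; everything after that is formal.
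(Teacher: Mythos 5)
Your proof is correct but takes a genuinely different route from the paper's. You prove the general monotonicity lemma $\perms{n}{k} \subseteq \perms{n}{k+1}$ by inserting a pass-through stack between stack $k$ and the output queue, translating each $m_{k+1}$ into the block $m_{k+1}m_{k+2}$, and checking by induction that the extra stack is empty except momentarily inside each such block. The paper argues more directly and at the input end: for $k > k_n$, first move all $n$ elements through stacks $1,\ldots,k-k_n$ so they all sit on stack $k-k_n$; once the input queue is empty, stack $k-k_n$ behaves exactly like an input queue feeding the $k_n$-stack subsystem $\{k-k_n+1,\ldots,k\}$, and since $\perms{n}{k_n}=S_n$ that subsystem can sort whatever permutation the staging step produced. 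Your lemma is the stronger, more reusable statement (it gives $\perms{n}{k}\subseteq\perms{n}{k'}$ for all $k\le k'$, with no hypothesis on either set), but it carries the bookkeeping cost of an explicit state-correspondence induction. The paper's argument is shorter precisely because it only needs the conclusion for $k\ge k_n$ and uses $\perms{n}{k_n}=S_n$ to absorb the (untracked) permutation created by staging, rather than simulating move-by-move. One small point in your favor: you note that one must confirm $k_n$ is well-defined, i.e.\ that $\{k:\perms{n}{k}=S_n\}$ is nonempty; the paper silently assumes this.
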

\begin{proof}
When $k \leq k_n$, the result holds by the definition of $k_n$. When $k > k_n$, we can sort any permutation by first moving all the elements to stack $k - k_n$ and then using the remaining $k_n$ stacks to actually sort them. 
\end{proof} 

\subsection{Complete Strings}

We will study sortable permutations by studying strings of moves. Naturally, we are interested in the set of strings that generate permutations. This motivates the following definitions.

\begin{definition}
We say a string $w \in \moves{k}^*$ is \emph{complete} if $w \ast s$ is a final state for some initial state $s \in \states{k}$. We say $w$ is \emph{$n$-complete} if $w \ast I(n, k)$ is a final state. Let $\typeone{n}{k} \subseteq \moves{k}^{*}$ be the set of all $n$-complete strings.
\end{definition}

The following lemma characterizes $n$-complete strings. 
\begin{lemma}
\label{typeone}
Let $w \in \moves{k}^*$. Then the following are equivalent:
\begin{enumerate}
\setlength{\itemsep}{0pt}
\setlength{\parskip}{0pt}
\setlength{\parsep}{0pt}
\item $w$ belongs to the set $\typeone{n}{k}$.
\item $w$ can be partitioned into $n$ subsequences, all equal to $m_1 m_2 \cdots m_{k+1}$.
\item $\length{u}_{1} \geq \length{u}_{2} \geq \cdots \geq \length{u}_{k+1} \geq 0$ for all prefixes $u$ of $w$ and $\length{w}_{1} = \cdots = \length{w}_{k+1} = n$. 
\end{enumerate}
\end{lemma}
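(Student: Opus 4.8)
The plan is to prove the three-way equivalence cyclically, establishing $1 \Rightarrow 2 \Rightarrow 3 \Rightarrow 1$, though in practice it is cleaner to think of $2$ as the combinatorial heart of the statement and relate the other two conditions to it. The key observation is that a single copy of $m_1 m_2 \cdots m_{k+1}$ corresponds to one element making the full journey from the input queue through every stack to the output queue, and since the system starts empty (apart from the input queue) and must end empty (apart from the output queue), \emph{every} element must make exactly this journey, contributing one $m_i$ for each $i$.

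First I would prove $1 \Rightarrow 2$. Suppose $w$ is $n$-complete, so $w \ast I(n,k)$ is a final state. Each application of $m_i$ moves one specific element from location $i-1$ to location $i$ (reading the input queue as location $0$, stack $i$ as location $i$, and the output queue as location $k+1$). Track each of the $n$ elements: for element labelled $j$, look at the subsequence of moves in $w$ that act on $j$. Since $j$ starts in the input queue and ends in the output queue, and each move advances it by exactly one location and moves are the only way it changes location, the moves acting on $j$ must be exactly one $m_1$, then one $m_2$, $\ldots$, then one $m_{k+1}$, in that order. These $n$ subsequences partition $w$ (every move acts on exactly one element), giving condition $2$.

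Next, $2 \Rightarrow 3$. If $w$ is partitioned into $n$ copies of $m_1 \cdots m_{k+1}$, then $\length{w}_i = n$ for each $i$ is immediate. For the prefix inequalities, note that within each of the $n$ subsequences the $j$-th occurrence of $m_{i+1}$ comes after the $j$-th occurrence of $m_i$; summing over the subsequences, in any prefix $u$ the count $\length{u}_{i+1}$ cannot exceed $\length{u}_i$ — intuitively, $m_{i+1}$ moves an element \emph{off} stack $i$, and you cannot take more elements off stack $i$ than the number $\length{u}_i$ you have put on it. Finally, $3 \Rightarrow 1$: given the prefix conditions, I would show by induction on $\length{u}$ that each prefix $u$ of $w$ is a legal sequence of moves from $I(n,k)$ and that after $u$, the number of elements on stack $i$ is $\length{u}_i - \length{u}_{i+1} \geq 0$ (and $\length{u}_0 - \length{u}_1 \geq 0$ elements remain in the input queue, writing $\length{u}_0$ for the input-queue count $n$). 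Legality of the next move $m_i$ requires a nonempty source, i.e. $\length{u}_{i-1} - \length{u}_i \geq 1$, which holds because extending $u$ by $m_i$ keeps condition $3$ satisfied; and the final counts $\length{w}_i = n$ force stack $i$ empty for all $i$ and the output queue full, i.e. a final state.

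The main obstacle is purely bookkeeping: making the bijection between ``moves acting on an element'' and the position tracking fully rigorous, and being careful that in condition $3$ the inequality $\length{u}_1 \ge \length{u}_2 \ge \cdots$ together with $\length{u}_1 \le n$ (automatic since there are only $n$ elements to dequeue) correctly encodes exactly the non-negativity of all the intermediate queue/stack sizes. None of the steps is deep; the value of the lemma is that it replaces a dynamic condition (``some run of moves succeeds'') with a static, purely syntactic one on prefix letter-counts, which is what later counting arguments will exploit. I would present the argument in the cyclic order above, possibly merging $3 \Rightarrow 1$ with the observation that the same induction shows $w \ast I(n,k) \neq \varnothing$.
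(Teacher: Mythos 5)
Your proposal is correct and follows essentially the same $1 \Rightarrow 2 \Rightarrow 3 \Rightarrow 1$ route as the paper, with the same core ideas (tracking per-element move subsequences for $1\Rightarrow 2$, reading off prefix counts from the partition for $2\Rightarrow 3$, and an inductive legality/stack-height invariant for $3\Rightarrow 1$). Your $3\Rightarrow 1$ is a bit more explicit than the paper's, which is fine; the only small slip in exposition is justifying $\length{u}_1\le n$ as ``automatic since there are only $n$ elements to dequeue'' (that presumes legality, which is what you're proving) — the clean justification is that $u$ is a prefix of $w$ and $\length{w}_1=n$, which you do invoke elsewhere.
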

\begin{proof}
\hfill
\begin{itemize}
\setlength\itemindent{1cm}
\item[$(1) \Rightarrow (2)$] \hfill \\
Given an $n$-complete sequence $w \in \moves{k}^*$, we can apply $w$ to the initial state $I(n, k)$. Each symbol in $w$ manipulates a single element when we apply it to a state. We may partition the symbols in $w$ into $n$ subsequences, where each subsequence consists of all moves that manipulate a chosen element. Each element travels from the input stack to the output stack, so the corresponding subsequence is $m_1 m_2 \cdots m_{k+1}$.

\item[$(2) \Rightarrow (3)$] \hfill \\
Since $w$ is composed of $n$ copies of $m_1 m_2 \cdots m_{k+1}$, we have $\length{w}_{i} = n$ for all $i$.  Similarly, any prefix $u$ of $w$ is composed of $n$ prefixes of $m_1 m_2 \cdots m_{k+1}$, so it is clear that $n \geq \length{u}_{1} \geq \length{u}_{2} \geq \cdots \geq \length{u}_{k+1} \geq 0$.
\item[$(3) \Rightarrow (1)$] \hfill \\
The fact that $\length{u}_{1} \geq \length{u}_{2} \geq \cdots \geq \length{u}_{k+1}$ for all prefixes $u$ of $w$ ensures that, when we apply $w$, we never attempt to pop from an empty stack. The sequence $w$ dequeues $n$ elements from the input queue since $\length{w}_{1} = n$, so $w$ is a legal sequence of moves on $I(n, k)$. Furthermore, applying $w$ to $I(n, k)$ moves all the elements to the last stack, since $\length{w}_{1} = \cdots = \length{w}_{k+1} = n$. Hence, $w \in \typeone{n}{k}$. 
\end{itemize}
This completes the proof.
\end{proof}
We note that any string in $\typeone{n}{k}$ contains $m_i$ exactly $n$ times, and therefore has length $n(k+1)$. Hence, we also define
\begin{align*}
\typetwo{n}{k} &:= \{ u \in \moves{k}^* : \length{u}_{1} = \cdots = \length{u}_{k+1} = n \} \\
\typethree{n}{k} &:= \{ u \in \moves{k}^* : \length{u} = n(k+1) \},
\end{align*}
and note that $\typeone{n}{k} \subseteq \typetwo{n}{k} \subseteq \typethree{n}{k}$ for all $n, k \geq 0$. We call the three languages type I, type II and type III respectively. In many cases, $\typetwo{n}{k}$ and $\typethree{n}{k}$ are easier to work with than $\typeone{n}{k}$, and one can show that the three languages have the same number of strings up to a polynomial factor. 

\section{Known Lower Bound}\label{bounds}

In this section we will prove Knuth's lower bound for $k_n$, but we need a lemma first. Define the product of two sets of permutations, $A, B \in S_n$ as $AB := \{ ab : a \in A, b \in B \}$.

\begin{lemma} \label{oneforall}
For all $n \geq 0$ and $k \geq 0$, we have $\perms{n}{k} = \perms{n}{1}^k = \{ \pi_1 \circ \ldots \circ \pi_{k} \colon \pi_1, \ldots, \pi_k \in \perms{n}{1} \}$.
\end{lemma}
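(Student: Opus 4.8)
The plan is to prove the two claimed equalities by induction on $k$, using the fact that stacks in series "compose" in the natural way. The key conceptual point is that a system of $k$ stacks can be viewed as a system of $k-1$ stacks feeding into a single final stack: any permutation generated by $k$ stacks is obtained by first generating some intermediate permutation $\rho$ with the first $k-1$ stacks, depositing it into stack $k-1$ (equivalently, onto the "input side" of the last stack), and then running the last stack as a 1-stack sorter on $\rho$. Conversely, any composition of this form is realizable. This gives the recurrence $\perms{n}{k} = \perms{n}{k-1}\cdot \perms{n}{1}$, and unwinding it yields $\perms{n}{k} = \perms{n}{1}^k$.

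First I would establish the base cases: $\perms{n}{0} = \{ \mathrm{id} \}$ (with zero stacks, the only legal play moves elements straight from the input queue to the output queue, preserving order) and $\perms{n}{1}$ is itself, so the statement holds trivially for $k = 0, 1$. Next, for the inductive step, I would make precise the "decomposition" claim. Given $w \in \moves{k}^*$ with $w \ast I(n,k) = t_\pi$, I would show $w$ can be reordered, without changing its effect, so that all moves $m_1, \ldots, m_{k-1}$ acting among the first $k-1$ stacks occur before any move $m_k$ or $m_{k+1}$ — this is a standard commutation argument, since a move $m_i$ with $i \le k-1$ and a move $m_j$ with $j \ge k$ act on disjoint parts of the system and so commute as long as legality is preserved (one has to check the stack-$k$ moves still see the elements arrive in the same order). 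The prefix of $m_1, \ldots, m_{k-1}$ moves then generates some $\rho \in \perms{n}{k-1}$ sitting in stack $k-1$, and the suffix of $m_k, m_{k+1}$ moves is exactly a 1-stack computation taking $\rho$ to $\pi$, so $\pi = \pi' \circ \rho$ for some $\pi' \in \perms{n}{1}$; by induction $\rho \in \perms{n}{1}^{k-1}$, giving $\pi \in \perms{n}{1}^k$. The reverse inclusion is easier: given $\rho \in \perms{n}{k-1}$ and $\pi' \in \perms{n}{1}$, realize $\rho$ with the first $k-1$ stacks, move the whole contents of stack $k-1$ nowhere yet, then treat stack $k$ as a 1-stack sorter applied to the sequence $\rho$; concatenating the move sequences shows $\pi' \circ \rho \in \perms{n}{k}$.

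The main obstacle I anticipate is the commutation / normal-form argument: showing rigorously that an arbitrary interleaving of "front moves" ($m_1, \ldots, m_{k-1}$) and "back moves" ($m_k, m_{k+1}$) can be untangled into front-moves-then-back-moves while (a) staying legal throughout and (b) producing the same final state. The cleanest way is probably to argue that swapping an adjacent pair $m_j m_i$ (with $j \ge k > i$) to $m_i m_j$ never breaks legality — the only subtlety is that $m_k$ pops from stack $k-1$, so one must be careful that postponing an $m_{k-1}$ that feeds stack $k$'s input doesn't starve a later $m_k$; in fact we are moving $m_{k-1}$ \emph{earlier}, which only makes stack $k-1$ fuller, so no $m_k$ is ever starved, and the symmetric check for the other direction is equally benign. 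Once this normal form is in hand, the bijection between $k$-stack computations and (pairs of) lower-stack computations is immediate, and the lemma follows. Alternatively, one can sidestep the explicit commutation by observing that the set of intermediate configurations in stack $k-1$ that are reachable-and-then-emptiable is exactly the set of sequences $\rho \in \perms{n}{k-1}$, which is essentially a restatement of the definition; I would mention this as the slicker route if the commutation bookkeeping becomes unwieldy.
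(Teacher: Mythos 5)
Your high-level decomposition---view $k$ stacks in series as $k-1$ stacks feeding into a final stack---is the same intuition that drives the paper's proof, but the commutation normal form your argument rests on does not hold. The moves $m_{k-1}$ and $m_k$ do \emph{not} commute: $m_k$ pops from stack $k-1$ while $m_{k-1}$ pushes onto stack $k-1$, so swapping an adjacent pair $m_k m_{k-1}$ to $m_{k-1} m_k$ makes the pop remove the freshly pushed element rather than the one that was previously on top, and the resulting state changes. Your check that ``no $m_k$ is ever starved'' establishes legality, but what fails is the effect, not the legality. A concrete instance with $k = 2$: the string $m_1 m_1 m_2 m_1 m_2 m_2 m_3 m_3 m_3$ outputs $1,3,2$, while your normal form $m_1 m_1 m_1 m_2 m_2 m_2 m_3 m_3 m_3$ outputs $1,2,3$. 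Once this normal form is gone, so is the forward inclusion. The same defect sinks the reverse direction: ``concatenating the move sequences'' forces all $n$ elements onto stack $k-1$ before any leave it, so stack $k-1$ can only ever act as a pure reverser at that interface; a genuine $1$-stack permutation on stack $k-1$ (or $k$) requires interleaving its pushes and pops with the moves upstream, and that interleaving cannot be deferred.

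The paper avoids reordering entirely. For $\perms{n}{k} \subseteq \perms{n}{1}^k$ it does not touch the order of $w$; it \emph{projects} $w$ onto each alphabet $\{m_i, m_{i+1}\}$, observes that the projection is a legal $1$-stack computation producing some $\pi_i \in \perms{n}{1}$, and notes that the overall permutation is $\pi_k \circ \cdots \circ \pi_1$. For the reverse inclusion it handles the interleaving explicitly: given $1$-stack strings $w_1 \in \{m_1,m_2\}^*, \ldots, w_k \in \{m_k, m_{k+1}\}^*$, it defines a partial order on the labelled occurrences of moves (inherited from the within-$w_\ell$ orderings), closes under transitivity, and takes any linear extension to obtain a merged $w \in \typeone{n}{k}$ whose per-stack projections are the $w_i$. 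If you want to keep your induction on $k$, you must replace ``reorder, then split'' with ``project, then merge,'' where the merge is exactly this partial-order argument specialised to two strings $w_{\mathrm{front}} \in \{m_1,\ldots,m_k\}^*$ and $w_{\mathrm{back}} \in \{m_k, m_{k+1}\}^*$ that share the letter $m_k$ with matching multiplicity.
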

\begin{proof}
Suppose we have a string $w \in \typeone{n}{k}$, and we apply it to an initial state with $n$ elements. Consider the $i$th stack and note that the elements enter in the order $x_1, \ldots, x_n$ and leave in the order $\pi_{i}(x_1), \ldots, \pi_i(x_n)$, for some $\pi_i \in \perms{n}{1}$. Then the permutation generated by the system is clearly the composition $\pi_{k} \circ \cdots \circ \pi_{1} \in \perms{n}{1}^{k}$, so $\perms{n}{k} \subseteq \perms{n}{1}^{k}$. 

In the other direction, let $\pi_1, \ldots, \pi_k$ be permutations in $\perms{n}{1}$. There exist strings $w_1 \in \{ m_1, m_2 \}^{*}$, $w_2 \in \{ m_2, m_3 \}^{*}$, \ldots, $w_k \in \{ m_k, m_{k+1} \}^{*}$ such that stack $i$ generates $\pi_i$ when we apply $w_i$, for each $1 \leq i \leq k$. We would like to find $w \in \typeone{n}{k}$ such that $w_i$ is the subsequence of moves in $w$ that manipulate stack $i$, for each $i$. If we can find such a string, then it must generate $\pi_k \circ \cdots \circ \pi_{1}$, and therefore show that $\perms{n}{1}^{k} \subseteq \perms{n}{k}$. 

Let us start by numbering the symbols in $w_1, \ldots, w_k$ so that the $j$th occurrence of $m_i$ is written $m_{i}^{(j)}$. Define a relation $\prec$ such that $m_{i_1}^{(j_1)} \prec m_{i_2}^{(j_2)}$ if $m_{i_1}^{(j_1)}$ is before $m_{i_2}^{(j_2)}$ in some $w_\ell$, and then close $\prec$ under transitivity to obtain a partial order (we leave it as an exercise to show that this can be done). Now we extend $\prec$ to a total order and list the symbols in sorted order to obtain $w \in \typeone{n}{k}$.
\end{proof}
Now that we have Lemma~\ref{oneforall}, we can prove Knuth's lower bound for $k_n$. This is not a new result, but we will use the same ideas for later theorems, so it is a useful example. 
 
\begin{theorem} \label{knownlowerbound}
For all $n, k \geq 0$ we have $\abs{\perms{n}{k}} \leq 4^{nk}$, from which it follows that $\frac{1}{2} \log_2 n + O(1) \leq k_n$.
\end{theorem}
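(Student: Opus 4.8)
The plan is to bound $\abs{\perms{n}{k}}$ by controlling the number of single-stack permutations and then invoking Lemma~\ref{oneforall}. First I would establish the base case $\abs{\perms{n}{1}} \leq 4^n$. By the exercise cited in the introduction, a permutation is generated by one stack if and only if it avoids a certain pattern, but a cleaner route is to count the strings that generate them: every permutation in $\perms{n}{1}$ arises from some string $w \in \{m_1, m_2\}^*$ with $\length{w}_1 = \length{w}_2 = n$ (by the analogue of Lemma~\ref{typeone} for $k = 1$), and there are exactly $\binom{2n}{n} \leq 4^n$ such strings. Since distinct permutations require distinct strings, $\abs{\perms{n}{1}} \leq \binom{2n}{n} \leq 4^n$.

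Next I would apply Lemma~\ref{oneforall}: since $\perms{n}{k} = \perms{n}{1}^k$, every element of $\perms{n}{k}$ is a composition $\pi_k \circ \cdots \circ \pi_1$ with each $\pi_i \in \perms{n}{1}$. Hence
\begin{align*}
\abs{\perms{n}{k}} \leq \abs{\perms{n}{1}}^k \leq (4^n)^k = 4^{nk}.
\end{align*}

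For the consequence about $k_n$, I would argue by contradiction with the pigeonhole principle. If $k < k_n$ then $\perms{n}{k} \neq S_n$, but more usefully, if $4^{nk} < n!$ then $\perms{n}{k}$ cannot be all of $S_n$, so $k_n > k$. Thus $k_n$ is at least the smallest $k$ with $4^{nk} \geq n!$, i.e. $k_n \geq \frac{\log_2 n!}{2n}$. Using the standard estimate $\log_2 n! = n \log_2 n - O(n)$ (from Stirling's formula), this gives $k_n \geq \frac{n \log_2 n - O(n)}{2n} = \frac{1}{2}\log_2 n - O(1)$, which is the claimed bound.

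The only real subtlety is the base case: one must make sure the counting argument for $\perms{n}{1}$ is airtight, namely that the map from generating strings to generated permutations, while not injective, still has image of size at most the number of strings. Everything else — the application of Lemma~\ref{oneforall}, the product bound, and the Stirling estimate — is routine. I would also remark that the same template (bound $\abs{\perms{n}{1}}$, raise to the $k$th power, compare with $n!$) is exactly what later sections refine by counting strings more cleverly, which is presumably why the author re-proves this known bound here.
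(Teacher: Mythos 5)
Your proof is correct and follows essentially the same route as the paper: bound $\abs{\perms{n}{1}}$ by counting generating strings, use Lemma~\ref{oneforall} to get $\abs{\perms{n}{k}} \leq \abs{\perms{n}{1}}^k \leq 4^{nk}$, then compare with $n!$ via Stirling. The only cosmetic difference is that you count the $\binom{2n}{n}$ strings with equal numbers of $m_1$ and $m_2$ (the set $\typetwo{n}{1}$) and then bound $\binom{2n}{n} \leq 4^n$, whereas the paper goes directly to all length-$2n$ strings ($\typethree{n}{1}$) to get $2^{2n} = 4^n$; both yield the same bound.
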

\begin{proof}
The lemma gives us $\perms{n}{k}=\perms{n}{1}^{k}$ and hence $\abs{\perms{n}{k}}=\abs{\perms{n}{1}^k}\leq\abs{\perms{n}{1}}^{k}$. Every permutation is generated by some string in $\typeone{n}{k}$, so $\abs{\perms{n}{1}} \leq \abs{\typeone{n}{1}} \leq \abs{\typethree{n}{1}} = 4^{n}$. We conclude that $\abs{\perms{n}{k}} \leq 4^{nk}$ for all $n, k \geq 0$. 

Now recall that $\perms{n}{k_n} = S_n$ by definition, so $n! = \abs{\perms{n}{k_n}} \leq 4^{nk_n}$. Taking logarithms gives $\log_2 n! \leq 2nk_n$, and then we apply Stirling's approximation to obtain $\frac{1}{2} \log_2 n + O(1) \leq k_n$. 
\end{proof}

\section{Working Towards a Better Lower Bound}

We claim that the lower bound presented in Theorem~\ref{knownlowerbound} can be improved. The key is to show that $\abs{\perms{n}{k}}$ grows exponentially slower than $4^{nk}$, which is the bound we use in the theorem, derived from the inequality $\abs{\perms{n}{k}} \leq \abs{\perms{n}{1}}^{k}$. Unfortunately, $\abs{\perms{n}{1}}^{k}$ is a poor upper bound for $\abs{\perms{n}{k}}$, even for small $n$ and $k$. For example, $\abs{\perms{4}{2}} = \abs{S_4} = 24$, but $\abs{\perms{4}{1}}^2 = 14^2 = 196$. 

Conceptually, the inequality $\abs{\perms{n}{k}} \leq  \abs{\perms{n}{1}}^{k}$ comes from breaking the system of stacks into $k$ single stack systems, and assuming those systems do not interact. The problem is that adjacent stacks \emph{do} interact, and one stack often undoes work done by another stack. For instance, if the first stack reverses the elements and the second stack reverses them again, then we have accomplished nothing with those stacks. 

Now suppose that we break the system into groups of $\ell$ stacks instead of single stacks, where $\ell > 1$ is a constant. If we can show that $\abs{\perms{n}{\ell}}$ grows slower than $\abs{\perms{n}{1}}^{\ell}$, then we can prove a better lower bound with the following proposition. 
\begin{proposition} \label{propforbound}
Let $\ell \geq 1$ be a constant. If $\abs{\perms{n}{\ell}} \in O(b^{n})$ then $k_n \geq \frac{\ell \log_2 n}{\log_2 b} + O(1)$.
\end{proposition}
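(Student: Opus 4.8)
\noindent\emph{Proof proposal.}
The plan is to repeat the argument of Theorem~\ref{knownlowerbound}, but to break the system of $k$ stacks into blocks of $\ell$ consecutive stacks rather than into single stacks. First I would upgrade Lemma~\ref{oneforall}: since composition of permutations is associative, $\perms{n}{k} = \perms{n}{1}^{k} = \bigl(\perms{n}{1}^{\ell}\bigr)^{\lfloor k/\ell\rfloor}\,\perms{n}{1}^{\,k \bmod \ell} = \perms{n}{\ell}^{\lfloor k/\ell\rfloor}\,\perms{n}{k \bmod \ell}$. The identity permutation lies in $\perms{n}{1}$ (it is generated on one stack by $m_{12}^{n}$), so padding with identities gives the monotonicity $\perms{n}{r}\subseteq\perms{n}{\ell}$ for every $r\le\ell$; hence $\perms{n}{k}\subseteq\perms{n}{\ell}^{\,m}$ with $m:=\lfloor k/\ell\rfloor+1$. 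Taking cardinalities and using the hypothesis, which gives a constant $C$ with $\abs{\perms{n}{\ell}}\le C b^{n}$ for all large $n$, we get $\abs{\perms{n}{k}}\le C^{m}b^{nm}$.

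Next I would specialize to $k=k_n$, where $\perms{n}{k_n}=S_n$, so that $n!\le C^{m}b^{nm}$ with $m=\lfloor k_n/\ell\rfloor+1$. Taking base-$2$ logarithms, $\log_2 n!\le m\log_2 C + mn\log_2 b$, and Stirling's approximation $\log_2 n! = n\log_2 n - \Theta(n)$ turns this into $n\log_2 n \le mn\log_2 b + m\log_2 C + O(n)$.

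The one point requiring care is that $m$ (essentially $k_n/\ell$) occurs on the right-hand side, so I need an a priori bound on $k_n$ to kill the lower-order terms. The crude upper bound $k_n\le\log_2 n+O(1)$ from the introduction suffices: it makes $m\log_2 C = O(\log n)$, negligible beside the $O(n)$ term. Dividing the displayed inequality by $n$ then gives $\log_2 n\le m\log_2 b + O(1)$, i.e. $m\ge\frac{\log_2 n}{\log_2 b}+O(1)$, where we use that $\log_2 b$ is a positive constant (indeed $\perms{n}{1}\subseteq\perms{n}{\ell}$ forces $b\ge 4^{1/\ell}>1$). Since $\ell$ is constant and $k_n\ge\ell\lfloor k_n/\ell\rfloor=\ell(m-1)$, I would conclude $k_n\ge\frac{\ell\log_2 n}{\log_2 b}+O(1)$, absorbing finitely many small $n$ into the $O(1)$. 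As a sanity check, $\ell=1$ and $b=4$ recovers Theorem~\ref{knownlowerbound}. I expect the only genuinely delicate step to be this bookkeeping with the self-referential appearance of $k_n$; the rest is a direct counting argument.
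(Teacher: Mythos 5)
Your proposal is correct and follows essentially the same route as the paper: decompose $\perms{n}{k}$ via Lemma~\ref{oneforall} into powers of $\perms{n}{\ell}$, specialize to $k = k_n$ where $\perms{n}{k_n} = S_n$, take logarithms, and apply Stirling. The paper writes this compactly as $n! \leq \abs{\perms{n}{\ell}^{\lceil k_n/\ell\rceil}} \leq \abs{\perms{n}{\ell}}^{k_n/\ell + 1} \leq c\,b^{n(k_n/\ell+1)}$ and then absorbs the lower-order terms into $O(n)$.

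You are actually somewhat more careful than the paper on two points that the published proof glosses over. First, the paper's final inequality treats the constant as if it stayed a constant across the exponentiation, when in fact $\abs{\perms{n}{\ell}}^{m} \leq (Cb^n)^m = C^m b^{nm}$ with $m$ depending on $n$; you correctly flag this and invoke the a priori bound $k_n \leq \log_2 n + O(1)$ so that $m\log_2 C = O(\log n)$ is harmless. (The paper's argument survives for the same reason, but it never says so.) Second, you note explicitly both the monotonicity $\perms{n}{r} \subseteq \perms{n}{\ell}$ via padding with the identity permutation $m_{12}^n$, and the fact that $\log_2 b > 0$ because $b \geq 4^{1/\ell} > 1$, both of which are tacitly assumed in the paper. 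These are minor but genuine gaps in the published proof that your version fills.
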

\begin{proof}
We can use Lemma~\ref{oneforall} to show that 
\begin{align*}
n! &= \abs{\perms{n}{k_n}} \leq \abs{\perms{n}{\ell}^{\lceil k_n/\ell \rceil}} \leq \abs{\perms{n}{\ell}}^{k_n/\ell+1} \leq cb^{n(k_n/\ell+1)}.
\end{align*}
Taking logarithms and using Stirling's approximation gives
\begin{align*}
n \log_2 n + O(n) &\leq \log_2 \left[ cb^{n(k_n / \ell+1)} \right] \\
\log_2 n + O(1) &\leq (k_n / \ell) \log_2 b \\
\frac{\ell \log_2 n}{\log_2 b} + O(1) &\leq k_n,
\end{align*}
completing the proof. 
\end{proof}
In the case $\ell = 1$ with $\abs{\perms{n}{1}} \in O(4^n)$ we get $\frac{1}{2} \log_2 n + O(1) \leq k_n$, which is the lower bound from Theorem~\ref{knownlowerbound}. In this section we will focus on $\ell = 2$ since it is the easiest case after $\ell = 1$. Unfortunately, we cannot use $\abs{\typeone{n}{2}}$ as a bound for $\abs{\perms{n}{2}}$ (as we did for the one stack case) since $\abs{\typeone{n}{2}} \approx \abs{\typethree{n}{2}} = 27^n > 16^n$. Evidently the ratio of $n$-complete strings to generate permutations, $\frac{\abs{\typeone{n}{2}}}{\abs{\perms{n}{2}}}$, is about $(27/16)^n$. Contrast this with one stack, where it turns out that $\abs{\typeone{n}{1}} = \abs{\perms{n}{2}}$.

In the next section, we define an equivalence relation that expresses when two strings generate the same permutation. We will show that we can rewrite strings in $\abs{\typethree{n}{2}}$ to a (more or less) canonical string in the same equivalence class. Then we will count only the canonical strings to bound $\abs{\perms{n}{2}}$ and obtain a better lower bound for $k_2$. 

\subsection{Equivalence of Strings} \label{equivalenceOfStrings}

Recall that $\moves{k}^{*}$ is a monoid that acts on game states $\states{k}$ via $\ast$. Let us define an equivalence relation $\sim$ on $\moves{k}^*$ such that $x \sim y$ if $x \ast s = y \ast s$ for all $s \in \states{k}$. In other words, two strings $x$ and $y$ are equivalent if they act on every state in exactly the same way.

We note three properties of $\sim$ without proof.
\begin{proposition}
The relation $\sim$ is a congruence relation. That is, if $u \sim v$ and $x \sim y$ then $ux \sim vy$. 
\end{proposition}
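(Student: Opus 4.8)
The plan is simply to unwind the definitions; the only structural input required is that $\ast$ is a monoid action. Concretely, performing a concatenated sequence of moves means performing the pieces one after another, so for all $w, w' \in \moves{k}^*$ and all $s \in \states{k}$ we have $(ww') \ast s = w' \ast (w \ast s)$. This identity holds for every state, including the illegal state $\varnothing$, since $\varnothing$ is absorbing for $\ast$ and any illegal move produces $\varnothing$. I would state this composition property explicitly at the outset, citing the paper's remark that $\ast$ is a monoid action.

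Next I would assume $u \sim v$ and $x \sim y$ and fix an arbitrary $s \in \states{k}$, aiming to show $(ux) \ast s = (vy) \ast s$. By the composition property, $(ux) \ast s = x \ast (u \ast s)$. Since $u \sim v$ we have $u \ast s = v \ast s$, so this equals $x \ast (v \ast s)$. Now comes the one step worth isolating: the hypothesis $x \sim y$ asserts $x \ast t = y \ast t$ for \emph{every} state $t$, so it may be applied at the particular state $t = v \ast s$, giving $x \ast (v \ast s) = y \ast (v \ast s)$. A final use of the composition property gives $y \ast (v \ast s) = (vy) \ast s$. Chaining the equalities yields $(ux) \ast s = (vy) \ast s$, and since $s$ was arbitrary, $ux \sim vy$.

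I do not expect a genuine obstacle here — the argument is a two-line substitution. The only thing to be careful about is that $\sim$ is defined by quantifying over \emph{all} states, which is exactly what licenses replacing $x$ by $y$ at the internal state $v \ast s$; had $\sim$ been defined only relative to, say, initial states, this step would not go through. For completeness I would also note in passing that $\sim$ is trivially reflexive, symmetric, and transitive, so that together with the congruence property the proposition shows $\moves{k}^*/{\sim}$ is a well-defined quotient monoid acting on $\states{k}$.
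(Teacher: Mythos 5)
Your proof is correct. The paper explicitly states this proposition (together with Proposition~\ref{onestateisenough} and Lemma~\ref{cancellative}) \emph{without proof}, so there is no official argument to compare against; but the argument you give is the natural one and is exactly what the omitted proof would be. You correctly identify the key composition property $(ww')\ast s = w'\ast(w\ast s)$ coming from the action (noting, appropriately, that $\varnothing$ is absorbing so the identity holds even through illegal states), and you correctly isolate the one nontrivial step: because $\sim$ quantifies over \emph{all} states, the hypothesis $x\sim y$ may be instantiated at the intermediate state $t = v\ast s$. Nothing is missing.
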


\begin{proposition} \label{onestateisenough}
Suppose $u, v \in \moves{k}^*$ are strings. Then $u \ast s = v \ast s \neq \varnothing$ for some $s \in S_k$ if and only if $u \sim v$. 
\end{proposition}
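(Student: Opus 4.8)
The plan is to pin down the action of an arbitrary string $w\in\moves{k}^{*}$ on states precisely enough that two strings with identical action can be recognized from their effect on a single state; the statement then falls out. The direction ``$u\sim v$ implies some $s$ works'' is immediate: pick any state $s$ with more than $\abs{u}+\abs{v}$ distinct elements in the input queue and in each of the $k$ stacks. Running $u$ from $s$ never asks us to pop an empty location (each source location starts with more than $\abs{u}$ elements and $u$ performs at most $\abs{u}$ moves), so $u\ast s\neq\varnothing$, and $v\ast s=u\ast s\neq\varnothing$ since $u\sim v$.

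For the converse the key is a \emph{padding lemma}: if $w$ is legal on $t$ (that is, $w\ast t\neq\varnothing$) and $\hat t$ is obtained from $t$ by inserting any number of fresh elements at the \emph{bottom} of each stack and at the \emph{back} of the input queue, then $w$ is legal on $\hat t$ and $w\ast\hat t$ is exactly $w\ast t$ with those same fresh elements reinserted in the same places. The reason is that legality of $w$ on $t$ already forces the number of elements in every location to stay nonnegative throughout the run, so on $\hat t$ the count of each location never drops below the ``padding level''; hence $w$ never reaches an inserted element and performs exactly the same operations on the original elements. Two consequences of this bookkeeping: (i) $w$ is legal on $t$ if and only if the vector of element-counts in the input queue and the stacks of $t$ dominates a threshold vector $\delta(w)$ whose $i$-th coordinate is the maximum, over prefixes $p$ of $w$, of (pops from location $i$ in $p$) minus (pushes to location $i$ in $p$); (ii) when $w$ is legal on $t$, the state $w\ast t$ splits as an \emph{inert part} --- a bottom block of each stack, a back block of the input queue, and all of the output queue of $t$, with sizes determined by $t$ and $\delta(w)$ --- together with an \emph{active part}, namely a rearrangement of the top $\delta_i(w)$ elements of each stack $i$ and the top $\abs{w}_{1}$ elements of the input queue, where the combinatorial shape of the rearrangement, call it $\beta(w)$, depends on $w$ alone. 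Thus $u$ and $v$ act identically on every state if and only if $\delta(u)=\delta(v)$ and $\beta(u)=\beta(v)$.

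It then remains to read these equalities off the hypothesis $u\ast s=v\ast s\neq\varnothing$. Since $s$ is legal for both, comparing the number of elements in each location of $s$, of $u\ast s$, and of $v\ast s$ forces $\abs{u}_{i}=\abs{v}_{i}$ for all $i$. For $\delta$: in $u\ast s$ the longest bottom block of stack $i$ consisting of elements that lay in stack $i$ of $s$ has length exactly $(\text{size of stack }i\text{ of }s)-\delta_i(u)$, because a stack-$i$ element of $s$ that is popped can never return to stack $i$ (moves only go forward), so the untouched ones remain a bottom prefix, and the element immediately above them was pushed in from stack $i-1$ (or the input queue) and hence was not in stack $i$ of $s$; reading this length off the common state $u\ast s=v\ast s$ gives $\delta_i(u)=\delta_i(v)$ (the case $i=0$ being just $\abs{u}_{1}=\abs{v}_{1}$). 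Finally, with $\delta(u)=\delta(v)$ and all $\abs{u}_{i}=\abs{v}_{i}$ in hand, $u$ and $v$ have the same inert part on $s$ and act on the very same top elements of $s$; tracking where those elements land in $u\ast s=v\ast s$ shows $\beta(u)=\beta(v)$. Hence $u\sim v$.

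The technical heart is the padding lemma together with the decomposition (ii); everything else is bookkeeping. The subtle point is that inserting fresh elements can turn an \emph{illegal} run into a legal one --- for instance by propping up a stack that would otherwise underflow --- so the padding lemma genuinely needs $w$ legal on the \emph{unpadded} state, and one must also be careful in (ii) when an active element tours through several stacks, being pushed and popped more than once, so that it shows up in the shape $\beta(w)$ in more than one place.
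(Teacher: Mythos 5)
The paper lists this proposition among the properties it ``note[s] \ldots without proof,'' so there is no in-paper argument to compare against; I will assess the proposal on its own. Your proof is correct. The engine is the padding lemma together with the decomposition of the action of $w$ into a threshold vector $\delta(w)$ (controlling on which states $w$ is legal) and a shape $\beta(w)$ (describing where the top $\delta_i(w)$ elements of each location end up, relative to the untouched bottom blocks), both determined by $w$ alone; recovering $\delta_i(u)$ from a single legal state as $(\text{size of stack }i\text{ in }s)$ minus the length of the bottom block of stack $i$ still occupied by original stack-$i$ elements is exactly the right observation (moves only go forward, so popped elements never return, so that block has no gaps), and once $\delta(u)=\delta(v)$ the two strings read the same active window of $s$ and $\beta(u)=\beta(v)$ is forced by equality of the output states. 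Two points deserve slightly more care in a written-up version. First, $\beta(w)$ should be defined precisely as a map from active input positions to final positions measured relative to the inert block at each location, so that both well-definedness (via the padding lemma and relabeling-invariance of the action) and the extraction of $\beta(u)=\beta(v)$ from $u\ast s=v\ast s$ are transparent; your closing remark about an active element ``showing up in $\beta(w)$ in more than one place'' is a red herring, since $\beta$ need only record final positions, not the itinerary. Second, the argument tacitly and correctly assumes states carry \emph{distinct} element labels (as the paper's $I(n,k)$ and the game setup do), which is what makes the bottom-block and element-tracking steps decisive; with repeated labels the ``only if'' direction of the proposition would be false, so this hypothesis is worth stating.
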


\begin{lemma} \label{cancellative}
For strings $a, b, u, v \in \moves{k}^*$, we have $aub \sim avb$ if and only if $u \sim v$.
\end{lemma}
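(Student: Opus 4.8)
The statement has two directions of very different character. The forward direction, $u \sim v \Rightarrow aub \sim avb$, is immediate: $\sim$ is a congruence relation, and since trivially $a \sim a$ and $b \sim b$, two applications of the congruence property give first $au \sim av$ and then $aub \sim avb$. All the work is in the converse, and the plan there rests on two observations. The first is Proposition~\ref{onestateisenough}: to prove $u \sim v$ it is enough to find a \emph{single} state $s$ with $u \ast s = v \ast s \neq \varnothing$. The second is that applying a fixed string of moves is an injective partial operation on $\states{k}$.

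To make the second point precise, I would first check that each individual move $m_i$, regarded as a partial function on $\states{k}$, is injective on its domain. The point is that in any state lying in the image of $m_i$ the element moved last is identifiable --- it sits on top of stack $i$, or at the back of the output queue in the case $i = k+1$ --- and moving it back one step recovers a unique pre-image. A composition of injective partial functions is again an injective partial function, so for every $b \in \moves{k}^*$ we have: if $b \ast p = b \ast q \neq \varnothing$ then $p = q$. This is exactly the fact that lets us ``cancel $b$ on the right''.

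Now the converse itself. Given any string $w$, there is a state on which $w$ is entirely legal: stock the input queue and each of the stacks $1, \dots, k$ with more than $\abs{w}$ distinctly labelled elements, so that no pop ever fails. Apply this with $w = aub$ to obtain a state $s_0$ with $aub \ast s_0 \neq \varnothing$, and set $t_0 := a \ast s_0$. Then $t_0 \neq \varnothing$, $u \ast t_0 \neq \varnothing$, and $b \ast (u \ast t_0) \neq \varnothing$. The hypothesis $aub \sim avb$ gives $b \ast (u \ast t_0) = b \ast (v \ast t_0)$; since the left-hand side is not $\varnothing$, neither is the right-hand side, so $v \ast t_0 \neq \varnothing$ as well. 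Injectivity of $b$ now forces $u \ast t_0 = v \ast t_0 \neq \varnothing$, and Proposition~\ref{onestateisenough} yields $u \sim v$, as desired.

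The genuinely load-bearing step is the injectivity of the individual moves (hence of $b$); everything else is routine bookkeeping with the definitions. Note in particular that no cancellation of $a$ on the left is needed --- we only use that $a \ast s_0$ is \emph{some} legitimate state on which to compare $u$ and $v$, and the apparent left/right asymmetry of the word disappears because we are free to choose the comparison state $t_0$ after seeing $a$.
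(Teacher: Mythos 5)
The paper states this lemma without proof (it appears among ``three properties of $\sim$'' noted without proof), so there is no argument in the paper to compare against; your proposal supplies one, and it is correct. The forward direction is indeed two applications of the congruence property together with reflexivity. For the converse you have isolated the two facts that make it work: first, each move $m_i$ is an injective partial map on $\states{k} \setminus \{\varnothing\}$, because in any state of the form $m_i \ast s \neq \varnothing$ the element just moved is identifiable (top of stack $i$, or back of the output queue when $i=k+1$) and putting it back yields the unique pre-image, whence by induction on $\abs{b}$ any $b \in \moves{k}^*$ satisfies $b \ast p = b \ast q \neq \varnothing \Rightarrow p = q$; and second, Proposition~\ref{onestateisenough} reduces $u \sim v$ to agreement on one legal state. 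Applying $a$ to a state $s_0$ that pre-loads the input queue and each stack with more than $\abs{aub}$ elements gives a witness $t_0 = a \ast s_0$ with $u \ast t_0 = v \ast t_0 \neq \varnothing$, and the conclusion follows. One thing worth saying explicitly is that this construction relies on $\states{k}$ containing states with elements already sitting on the stacks, not only states reachable from some $I(n,k)$; the paper's definition of $\states{k}$ does permit this, and the paper itself uses exactly such states in its sketch of the algorithm for computing $R_{16}$. Your observation that no left-cancellation of $a$ is needed --- you simply evaluate $a$ to reach the comparison state --- is a clean way to avoid proving a surjectivity-type fact that would otherwise be required.
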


For example, every string in $\typeone{2}{2}$ generates a permutation in $\perms{2}{2}$. Therefore $\sim$ splits $\typeone{2}{2}$ into two congruence classes: $m_{123123} \sim m_{121323} \sim m_{112233}$ and $m_{112323} \sim m_{121233}$. Using cancellation, we can deduce a set of relations, $R := \{ m_{13} \sim m_{31}, m_{1223} \sim m_{2312}, m_{1232} \sim m_{2123} \}$. 

The first relation, $m_{13} \sim m_{31}$, says that performing move $m_1$ followed by $m_3$ is indistinguishable from performing $m_3$ and then $m_1$. This makes sense, since $m_1$ affects the input queue and stack 1, while $m_3$ affects stack 2 and the output queue. Since $m_1$ and $m_3$ operate on different pieces of the system, they should commute. The other two relations, $m_{12} m_{23} \sim m_{23} m_{12}$ and $m_{123} m_2 \sim m_{2} m_{123}$ are also commutativity relations $uv \sim vu$ for strings $u,v \in \moves{2}^{*}$ that operate on different parts of the system.

We can generate a congruence relation $\sim_R$ (based on the relations in $R$) such that $m_{13} \sim_R m_{31}, m_{1223} \sim_R m_{2312}, m_{1232} \sim_R m_{2123}$. It follows that for $u, v \in \moves{2}$, if $u \sim_R v$ then $u \sim v$. In other words, $\sim_{R}$ is finer relation than $\sim$. One can show that $m_{122233} \sim m_{223312}$ but $m_{122233} \not \sim_R m_{223312}$, so $\sim_{R}$ is strictly finer than $\sim$.  

\subsection{String Rewriting}

Recall that every string in $\typeone{n}{2}$ generates a permutation. Since two strings $u,v \in \moves{2}^*$ generate the same permutation if and only if $u \sim v$, the congruence classes of $\typeone{n}{2}$ correspond exactly to permutations in $\perms{n}{2}$. We would like to find a set of representative strings, one for each equivalence class of $\moves{2}^{*}$. Unfortunately, we do not know how to do this for $\sim$. Instead we will find a set of representatives $\mathcal{V}$ for the equivalence classes of $\moves{2}^{*}$ under the relation $\sim_{R}$. 

Let $\mathcal{V}$ contain the lexicographically maximal string from each equivalence class of $\moves{2}^{*}$. Suppose $w$ is a string in $\mathcal{V}$ containing $m_{13}$ as a substring. Then we can write $w = a m_{13} b$ for some $a, b \in \moves{2}^{*}$ and note that $w' := a m_{31} b$ is lexicographically larger than $w$. But $w'$ is in the same equivalence class as $w$ since $w \sim_{R} w'$, so we have a contradiction. Therefore all strings in $\mathcal{V}$ avoid having $m_{13}$ as a substring. There is a similar argument for the other relations in $R$, so strings in $\mathcal{V}$ avoid $m_{1223}$ and $m_{1232}$ as well. 

Define $\mathcal{U} \subseteq \moves{k}^{*}$ to be the set of strings that do not contain $m_{13}$, $m_{1223}$ or $m_{1232}$ as substrings. We have just shown that $\mathcal{V} \subseteq \mathcal{U}$. It turns out that $\mathcal{U} = \mathcal{V}$, but this is more than we need, so we omit the proof\footnote{The proof amounts to showing that the rewriting rules $m_{13} \rightarrow m_{31}$, $m_{1223} \rightarrow m_{2312}$, $m_{1232} \rightarrow m_{2123}$ form a confluent system.}.

\subsection{Enumerating Strings in $\mathcal{U}$}

Let $\pi$ be an arbitrary permutation in $\perms{n}{2}$, generated by some string $w$ in $\typeone{n}{2}$. This string has a representative $w'$ in $\mathcal{V} \subseteq \mathcal{U}$ such that $w \sim_{R} w'$ and hence $w'$ also generates $\perms{n}{2}$. It follows that $w'$ is also in $\typeone{n}{2} \subseteq \typethree{n}{2}$. Thus, any permutation in $\perms{n}{2}$ is generated by a string in the set $\typethree{n}{2} \cap \mathcal{U}$. Each string in $\typethree{n}{2} \cap \mathcal{U}$ can generate at most one permutation, so we have $\abs{\perms{n}{2}} \leq \abs{\typethree{n}{2} \cap \mathcal{U}}$. Note that $\typethree{n}{2}$ is the set of strings in $\moves{2}^{*}$ of length $3n$, so if we can find a generating function $U(x)$ that counts strings in $\mathcal{U}$ (weighted by the length of the string), then 
\begin{align*}
\abs{\perms{n}{2}} \leq \abs{\typethree{n}{2} \cap \mathcal{U}} = [x^{3n}] U(x). 
\end{align*}

There are several published techniques for counting strings avoiding some finite set of forbidden substrings. For example, there is a technique using systems of equations by Guibas and Odlyzko \cite{guibasodlyzko}, or the Goulden-Jackson cluster method in~\cite{clustermethod,combenum}. Using either method, the generating function for $U(x)$ is 
\begin{align*}
U(x) &= \frac{1}{1 - 3x + x^2 + 2x^4}.
\end{align*}
The denominator has distinct roots, so the partial fraction decomposition is
\begin{align*}
U(x) &= \sum_{i=1}^{4} \frac{c_i}{\lambda_i - x}
\end{align*}
where $c_1, \ldots, c_4, \lambda_1, \ldots, \lambda_4 \in \mathbb C$, with $\lambda_1, \ldots, \lambda_4$ are the roots of $1 - 3x + x^2 + 2x^4$:
\begin{align*}
\lambda_1 &\doteq 0.40671 & \lambda_3 &\doteq -0.59149+1.1108i \\
\lambda_2 &\doteq 0.77626 & \lambda_4 &\doteq -0.59149-1.1108i.
\end{align*}
Each of the four terms is a geometric series and it is not difficult to show that $[x^n] \frac{c_i}{\lambda_i - x}$ is in $\Theta(\abs{\lambda_i}^{-n})$ for all $i$. The first term dominates, so $[x^n] U(x)$ is in $\Theta(\abs{\lambda_1}^{-n}) \doteq \Theta(2.45875^n)$. It follows that 
\begin{align*}
\abs{\perms{n}{2}} &\leq \abs{\typethree{n}{2} \cap \mathcal{U}} \\
&= [x^{3n}] U(x) \\
&\in O(14.864^n).
\end{align*}
We can now prove our first improvement to the lower bound: 
\begin{theorem} \label{mainresult}
For all $n$, 
\begin{align*}
k_n \geq \frac{2 \log_2 n}{\log_2 14.864} + O(1) \doteq 0.51364 \log_2 n + O(1).
\end{align*}
\end{theorem}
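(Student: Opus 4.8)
The plan is to apply Proposition~\ref{propforbound} with $\ell = 2$, using the bound on $\abs{\perms{n}{2}}$ that has just been established. All of the substantive work — rewriting strings via $\sim_R$, reducing to the forbidden-substring-free set $\mathcal{U}$, and computing the generating function $U(x)$ — has already been done in the preceding subsections, so the proof of the theorem itself is essentially a one-line invocation.

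Concretely, I would first recall that every permutation in $\perms{n}{2}$ is generated by some string in $\typeone{n}{2}$, and that such a string may be replaced by an equivalent (under $\sim_R$, hence under $\sim$, hence permutation-preserving) representative lying in $\mathcal{V} \subseteq \mathcal{U}$; since length and move-counts are preserved, this representative still lies in $\typeone{n}{2} \subseteq \typethree{n}{2}$. Hence $\abs{\perms{n}{2}} \leq \abs{\typethree{n}{2} \cap \mathcal{U}} = [x^{3n}] U(x) \in O(14.864^n)$, exactly as derived above. Then I would set $\ell = 2$ and $b = 14.864$ in Proposition~\ref{propforbound}, which immediately yields
\begin{align*}
k_n \geq \frac{2 \log_2 n}{\log_2 14.864} + O(1) \doteq 0.51364 \log_2 n + O(1).
\end{align*}

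There is essentially no obstacle here: the theorem is a corollary of material already assembled. If I were being careful about the $O(1)$ term, the one point worth a sentence is that $14.864$ is a rounded upper estimate for $\abs{\lambda_1}^{-3}$ (where $\lambda_1 \doteq 0.40671$ is the dominant root of $1 - 3x + x^2 + 2x^4$), so that $\abs{\perms{n}{2}} \leq c \cdot 14.864^n$ genuinely holds for some constant $c$ and all $n$; this is what licenses the application of Proposition~\ref{propforbound}, whose hypothesis is precisely $\abs{\perms{n}{\ell}} \in O(b^n)$. Beyond that, the numerical evaluation $2/\log_2 14.864 \doteq 0.51364$ is routine and I would not belabor it.
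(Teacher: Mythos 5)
Your proposal is correct and takes exactly the same approach as the paper: the paper's proof is the single line ``Apply Proposition~\ref{propforbound} to the bound $\abs{\perms{n}{2}} \leq [x^{3n}] U(x) \in O(14.864^n)$.'' The extra sentences you include merely restate the derivation of that bound from the preceding subsections, which is accurate and harmless.
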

\begin{proof}
Apply Proposition~\ref{propforbound} to the bound $\abs{\perms{n}{2}} \leq [x^{3n}] U(x) \in O(14.864^n)$. 
\end{proof}

\section{Further Improvements}
\label{furtherimprovements}

In this section we discuss techniques to further improve our lower bound in the previous section. These techniques give us the lower bounds $k_n \geq 0.52224 \log_2 n + O(1)$ and $k_n \geq 0.53028 \log_2 n + O(1)$ respectively. Then we combine these techniques with a large set of relations, $R_{16}$, to obtain our final bound,
\begin{align*}
k_n \geq 0.561 \log_2 n + O(1).
\end{align*}

\subsection{Additional Relations}

For our original lower bound, we used the three relations $m_{13} \sim_R m_{31}$, $m_{1223} \sim_R m_{2312}$ and $m_{1232} \sim_R m_{2123}$. Recall that the congruence relation $\sim_R$ is strictly finer than $\sim$, so there exist elements $u, v$ that are equivalent under $\sim$ but not under $\sim_{R}$. By adding more relations to $R$, we reduce the number of equivalence classes, and therefore improve the bound on $k_n$. 

Suppose we let $R' = R \cup \{ m_{112223} \sim m_{231122}, m_{122233} \sim m_{223312} \}$ be a new set of relations. By the same process as before, we obtain a set of strings $\mathcal{U}' \subseteq \moves{2}^*$ avoiding $m_{13}$, $m_{1223}$, $m_{1232}$, $m_{112223}$ and $m_{122233}$. The corresponding generating function is 
\begin{align*}
\hat{U}(x) &= \frac{1}{1 - 3x + x^2 + 2x^4 + 2x^6}
\end{align*}
The smallest real root of the denominator is approx. $0.41278$, leading to $[x^n] \hat{U}(x) \in \Theta(2.4229^n)$ and $\abs{\perms{n}{2}} \in O(14.218^n)$. In the end, we get the lower bound $k_n \geq 0.52224 \log_2 n + O(1)$.

We see that additional relations improve our lower bound. For our final result we will use $R_{16}$, the set of all nontrivial relations $u \sim v$ such that $\abs{u}, \abs{v} \leq 16$. We use a computer to generate the 1591 relations in $R_{16}$\footnote{See \texttt{http://www.student.cs.uwaterloo.ca/\textrm{\symbol{126}}l3schaef/stacksort/} for the list of relations.}. The algorithm for finding relations is roughly as follows. 
\begin{enumerate}
\item We consider all strings of length $\ell$, and let $s$ be the state with $\ell$ elements in the input queue and $\ell$ elements in each queue. 
\item Then we compute $w \ast s$ for each string $w \in \moves{2}^{*}$ of length $\ell$, and note that $w \ast s \neq \varnothing$ since we do not have enough moves to empty a stack and pop from it.
\item Finally, the equivalence classes of $w \ast s$ (under $=$) correspond to the equivalence classes of $w$ (under $\sim$) by Proposition~\ref{onestateisenough}.
\end{enumerate}
It is clear that the algorithm is computationally expensive, and although there are ways to improve it, the running time is unavoiably exponential in $\ell$. We draw the line at $R_{16}$ because of the cost of computing new relations and solving the resulting equations, as well as the rapidly diminishing contribution (in terms of the lower bound) of each new relation.

There are some patterns among the relations we computed, although most of the relations do not fit any known pattern. Consider the following relations in $R'$
\begin{align*}
m_{12} m_{23} &\sim m_{23} m_{12} \\
m_{1122} m_{23} &\sim m_{23} m_{1122} \\
m_{12} m_{2233} &\sim m_{2233} m_{12}.
\end{align*}
These relations are the first three examples of an infinite set of relations of the form $(m_{1} u m _{2}) (m_{2} v m_{3}) \sim (m_{2} v m_{3}) (m_{1} u m_{2})$ for $u, v \in \moves{2}^{*}$ \footnote{There are additional conditions on $u$ and $v$. Specifically, $u$ must transfer elements from the input queue to stack 2, and $v$ must transfer elements from stack 1 to the output queue}. Similarly, there are relations of the form $u m_{2} \sim m_{2} u$ for $u \in \moves{2}^{*}$ a complete string, although not every complete string gives a useful relation. For instance, $u = m_{123} m_{123}$ is complete, but since we can already deduce $u m_{2} \sim m_{2} u$ from the relation $m_{123} m_{2} \sim m_{2} m_{123}$, the relation $u m_{2} \sim m_{2} u$ is redundant. 

\subsection{Weighted Generating Functions}

Recall that to prove our main result, we found a generating function $U(x)$ for the set $\mathcal{U}$, weighted by string length. If we use a different weight function, which sends $w$ to $x_1^{\abs{w}_{1}} x_2^{\abs{w}_{2}} x_3^{\abs{w}_{3}}$ then we get a multivariate generating function $U(x_1, x_2, x_3)$ for $\mathcal{U}$. Then the Guibas-Odlyzko method or the Goulden-Jackson cluster method can be used to compute $U(x_1, x_2, x_3)$:
\begin{align*}
U(x_1, x_2, x_3) &= \frac{1}{1 - x_1 - x_2 - x_3 + x_1 x_3 + 2x_1 x_2^2 x_3}.
\end{align*}
We have seen that $\abs{\perms{n}{2}} \leq \abs{\typeone{n}{2} \cap \mathcal{U}} \leq \abs{\typetwo{n}{2} \cap \mathcal{U}}$ and hence
\begin{align*}
\abs{\perms{n}{2}} &\leq \abs{\typeone{n}{2} \cap \mathcal{U}} \leq \abs{\typetwo{n}{2} \cap \mathcal{U}} = [x_1^n x_2^n x_3^n] U(x_1, x_2, x_3).
\end{align*}
Notice that since all the coefficients of $U(x_1, x_2, x_3)$ are positive integers, if we have a ring homomorphism $\varphi : \mathbb{Z}[[x_1, x_2, x_3]] \rightarrow \mathbb{Z}[[x]]$ then 
\begin{align*}
[x_1^n x_2^n x_3^n] U(x_1, x_2, x_3) &\leq [\varphi(x_1)^n \varphi(x_2)^n \varphi(x_3)^n] \varphi(U(x_1, x_2, x_3)).
\end{align*}
This allows us to give $m_{1}$, $m_{2}$ and $m_{3}$ different weights, as in the following proposition. 
\begin{proposition}
For all $n \geq 0$, 
\begin{align*}
\abs{\perms{n}{2}} &\leq [x^{n(\alpha_1 + \alpha_2 + \alpha_3)}] \left( \frac{1}{1 - x^{\alpha_1} - x^{\alpha_2} - x^{\alpha_3} + x^{\alpha_1 + \alpha_3} + 2x^{\alpha_1 + 2\alpha_2 + \alpha_3}} \right).
\end{align*}
\end{proposition}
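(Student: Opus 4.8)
The plan is to transport the bound $\abs{\perms{n}{2}} \leq [x_1^n x_2^n x_3^n] U(x_1,x_2,x_3)$, already established above, through a monomial substitution $x_i \mapsto x^{\alpha_i}$ and to exploit the nonnegativity of the coefficients of $U$. Throughout I read the statement with the standing hypothesis that $\alpha_1, \alpha_2, \alpha_3$ are positive integers, which is exactly what is needed to make the substitution legitimate on power series.

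First I would define the map $\varphi \colon \mathbb{Z}[[x_1,x_2,x_3]] \to \mathbb{Z}[[x]]$ determined by $\varphi(x_i) = x^{\alpha_i}$. Because each $\varphi(x_i)$ has zero constant term (here the assumption $\alpha_i \geq 1$ is used), $\varphi$ extends uniquely to a continuous ring homomorphism on the whole power series ring: a formal sum $\sum_{a,b,c} u_{a,b,c}\, x_1^a x_2^b x_3^c$ is sent to $\sum_{a,b,c} u_{a,b,c}\, x^{a\alpha_1 + b\alpha_2 + c\alpha_3}$, and the latter is a well-defined element of $\mathbb{Z}[[x]]$ since only finitely many triples $(a,b,c)$ contribute to any fixed power of $x$.

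Next I would identify $\varphi(U)$ with the series on the right-hand side of the proposition. Write $U = 1/D$ with $D = 1 - x_1 - x_2 - x_3 + x_1 x_3 + 2 x_1 x_2^2 x_3$; since $D$ has constant term $1$ it is a unit, and a continuous ring homomorphism sends $D^{-1}$ to $\varphi(D)^{-1}$, so $\varphi(U) = \bigl(1 - x^{\alpha_1} - x^{\alpha_2} - x^{\alpha_3} + x^{\alpha_1+\alpha_3} + 2 x^{\alpha_1 + 2\alpha_2 + \alpha_3}\bigr)^{-1}$, exactly as claimed. Then I would compare coefficients: the coefficient of $x^{n(\alpha_1+\alpha_2+\alpha_3)}$ in $\varphi(U)$ equals $\sum u_{a,b,c}$ over all $(a,b,c)$ with $a\alpha_1 + b\alpha_2 + c\alpha_3 = n(\alpha_1+\alpha_2+\alpha_3)$, and the triple $(n,n,n)$ is one of the indices in this sum. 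Since every $u_{a,b,c} = [x_1^a x_2^b x_3^c] U$ is a nonnegative integer (as noted above, $U(x_1,x_2,x_3)$ has nonnegative integer coefficients, being the enumerator of $\mathcal{U}$), discarding all terms except the one indexed by $(n,n,n)$ gives $[x^{n(\alpha_1+\alpha_2+\alpha_3)}]\varphi(U) \geq [x_1^n x_2^n x_3^n] U \geq \abs{\perms{n}{2}}$, which is the proposition.

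There is essentially no hard step here; the whole argument is a bookkeeping exercise around a homomorphism already flagged in the preceding discussion. The only points that genuinely require care are the well-definedness of $\varphi$ as a map on power series rather than merely polynomials — which is what forces $\alpha_i \geq 1$ — and the fact that continuity lets $\varphi$ pass through the inversion $U = 1/D$. Everything else is immediate from nonnegativity of coefficients.
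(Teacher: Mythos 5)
Your proof is correct and follows exactly the route the paper takes: it defines the substitution homomorphism $\varphi \colon x_i \mapsto x^{\alpha_i}$, passes it through $U = 1/D$, and uses nonnegativity of the coefficients of $U$ to pick out the $(n,n,n)$ term and compare with $\abs{\perms{n}{2}} \leq [x_1^n x_2^n x_3^n] U$. You supply more detail than the paper's one-line proof (in particular, flagging that $\alpha_i \geq 1$ is needed for $\varphi$ to be well-defined on power series), but the argument is the same one.
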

\begin{proof}
Take $\varphi$ to be the ring homomorphism such that $x_i \mapsto x^{\alpha_i}$ for $i=1,2,3$. The result follows easily. 
\end{proof}

For example, take $\alpha_1 = \alpha_3 = 1$ and $\alpha_2 = 2$. We get 
\begin{align*}
\abs{\perms{n}{2}} &\leq [x^{4n}]  \left( \frac{1}{1 - 2x + 2x^{6}} \right) \\
&\in O(13.708^n).
\end{align*}
Applying Proposition~\ref{propforbound} gives
\begin{align*}
k_n &\geq 0.52953 \log_2 n + O(1).
\end{align*}
This is quite an improvement over our first result, considering that only the analysis changed. Unfortunately, it is not immediately obvious how to choose values for $\alpha_1, \alpha_2, \alpha_3 \in \mathbb{N}$ to get the best possible bound. 

The asymptotic behaviour of a univariate generating function is connected to its radius of convergence, and therefore to the location of its poles. Let us assume that multivariate generating functions are similar, and look at the zeros of the denominator. In particular, let us try to find $x_1, x_2, x_3 \in (0, \infty)$ positive real numbers such that our polynomial denominator $1-x_1-x_2-x_3+x_1x_3+2x_1x_2^2x_3$ is zero and $x_1 x_2 x_3$ is minimized (or equivalently, $\frac{1}{x_1 x_2 x_3}$ is maximized). Using Lagrange multipliers, the solution satisfies the following equations:
\begin{align*}
0 &= 1-x_1-x_2-x_3+x_1 x_3+2x_1 x_2^2 x_3 \\
0 &= x_1 x_3 + \lambda(4x_1 x_2 x_3-1) \\
0 &= x_2 x_3 + \lambda(x_3+2x_2^2 x_3-1) \\
0 &= x_1 x_2 + \lambda(x_1+2x_1 x_2^2-1).
\end{align*}
In this case, it is possible to solve these equations exactly by hand, but in general we employ a computer algebra system and obtain numerical results. The optimal solution is $x_1 = x_3 = \frac{1}{2}$ and $x_2 = 1 - \frac{\sqrt{2}}{2}$, giving $\frac{1}{x_1 x_2 x_3} = 8 + 4\sqrt{2} \doteq 13.659$ as the growth rate. Therefore, we expect that $[x_1^n x_2^n x_3^n] U(x_1, x_2, x_3)$ is in $O((8 + 4\sqrt{2} + \varepsilon)^n)$ for all $\varepsilon > 0$, with a corresponding lower bound is $k_n \geq 0.53029 \log_2 n + O(1)$. 

Now we will determine the optimal weights, and see an example to illustrate that close-to-optimal weights give a close-to-optimal lower bound. Our optimal weights should be such that $x_1 = x^{\alpha_1} = x^{\alpha_3} = x_3$ and $x_2 = x^{\alpha_2}$. It follows that the ratio of two weights is
\begin{align*}
\frac{\alpha_2}{\alpha_1} = \frac{x_2}{x_1} = \frac{\log \left(1 - \frac{\sqrt{2}}{2} \right)}{\log(1/2)} \doteq 1.77155.
\end{align*}
Hence, we take integer weights $\alpha_1 = \alpha_3 = 4$ and $\alpha_2 = 7$ which give the ratio $\alpha_2 / \alpha_1 = 1.75 \approx 1.77155$. With these weights we obtain 
\begin{align*}
[x^{(\alpha_1 + \alpha_2 + \alpha_3)n}] U(x^{\alpha_1}, x^{\alpha_2}, x^{\alpha_3}) &= [x^{15n}] U(x^4, x^7, x^4) \in O(13.657^n)
\end{align*}
and a corresponding lower bound of $k_n \geq 0.53028 \log_2 n + O(1)$, which is virtually indistinguishable from the bound we would expect with optimal weights. 

\subsection{Main Lower Bound}

For our main result, we apply the weighted generating function technique to a large set of relations, $R_{16}$. For simplicity, we assume the weights for $m_1$ and $m_3$ are equal, so $x_1 = x_3$. We obtain a generating function $U(x_1, x_2, x_3) = \frac{1}{p(x_1, x_2, x_3)}$ where
\begin{align*}
p(x_1, x_2, x_1) :=& 1 - 2x_1 - x_2 + 
x_1^{2} +
2 x_1^{2} x_2^{2} +
2 x_1^{3} x_2^{3} +
2 x_1^{4} x_2^{3} +
5 x_1^{4} x_2^{4} +
4 x_1^{5} x_2^{4} +
14 x_1^{5} x_2^{5} + \\
&
8 x_1^{6} x_2^{4} +
13 x_1^{6} x_2^{5} +
42 x_1^{6} x_2^{6} +
22 x_1^{7} x_2^{5} +
40 x_1^{7} x_2^{6} +
41 x_1^{8} x_2^{5} +
132 x_1^{7} x_2^{7} +
77 x_1^{8} x_2^{6} + \\
&
123 x_1^{8} x_2^{7} +
134 x_1^{9} x_2^{6} +
429 x_1^{8} x_2^{8} +
252 x_1^{9} x_2^{7} +
248 x_1^{10} x_2^{6}.
\end{align*}
We wish to maximize $\frac{1}{x_1^2 x_2}$ subject to $p(x_1, x_2, x_1) = 0$, and obtain optimal weights from the solution. The optimal solution is at $x_1 \doteq 0.47565, x_2 \doteq 0.37405$ with objective value $\beta := \frac{1}{x_1^2 x_2} \doteq 11.817$. Then by choosing the right weights we get $\abs{\perms{n}{2}} \leq \abs{\typetwo{n}{2} \cap \mathcal{U}} \in O((\beta + \varepsilon)^{n})$ for arbitrarily small $\varepsilon > 0$. For sufficiently small $\varepsilon$, this gives us the bound
\begin{align*}
k_n \geq 0.56136 \log_2 n + O(1).
\end{align*}

\section{Conclusion}

In summary, we gave a proof of Knuth's lower bound ($k_n \geq \frac{1}{2} \log_2 n + O(1)$) using a counting argument that bounded $\perms{n}{k}$. We generalized the counting argument from single stacks to pairs of stacks and discovered that many sequences of moves may generate a permutation. We found an equivalence relation on strings of moves, and used string rewriting techniques to reduce the number of strings, leading to a new lower bound, $k_n \geq 0.513 \log_2 n + O(1)$. 

In Section \ref{furtherimprovements} we discussed techniques for improving the bound using more relations or more sophisticated analysis. The combination of these techniques and a set ($R_{16}$) of 1591 relations give us our main result,
\begin{align*}
k_n &\geq 0.561 \log_2 n + O(1).
\end{align*}
This bound can almost certainly be improved by using more relations, and finding the true asymptotic complexity of $k_n$ remains an open problem. 

\section{Acknowledgments}

I would like to thank Ming Li for bringing this problem to my attention, and Jeffrey Shallit for his assistance preparing this paper. 

\bibliography{stackbib}{}
\bibliographystyle{plain}
\end{document}